\documentclass[journal]{IEEEtran}
\IEEEoverridecommandlockouts
\usepackage{algorithm}
\usepackage{array}
\usepackage[caption=false,font=normalsize,labelfont=sf,textfont=sf]{subfig}
\usepackage{tabularx,booktabs,makecell}
\usepackage{textcomp}
\usepackage{stfloats}
\usepackage{url}
\usepackage{verbatim}
\usepackage{graphicx}
\usepackage{cite}
\hyphenation{op-tical net-works semi-conduc-tor IEEE-Xplore}
\usepackage{amsmath,amssymb,amsfonts,amsthm,mathtools}
\usepackage{xcolor}
\usepackage{blindtext}
\usepackage{lipsum}
\usepackage{subfig}
\usepackage{cuted}
\usepackage{physics}
\usepackage[long]{optidef}
\usepackage{multirow}
\usepackage{steinmetz}
\usepackage{bigints}
\usepackage{algpseudocode}
\usepackage{graphicx}
\usepackage{xcolor}
\usepackage{placeins}

\usepackage{tikz}
\usetikzlibrary{matrix, positioning, shapes, backgrounds, fit, arrows.meta}

\definecolor{chaosblue}{RGB}{30, 100, 200}
\definecolor{chaosgreen}{RGB}{0, 150, 100}
\definecolor{chaosred}{RGB}{200, 50, 50}
\definecolor{chaospurple}{RGB}{150, 50, 200}
\definecolor{chaosorange}{RGB}{230, 120, 0}

\newcommand{\R}{\mathbb{R}}

\newtheorem{definition}{Definition}[section]

\newtheorem{lemma}[definition]{Lemma}
\newtheorem{proposition}[definition]{Proposition}

\newcommand{\E}{\mathbb{E}}
\newcommand{\Var}{\mathrm{Var}}
\newcommand{\ind}{\mathbb{I}}

\begin{document}
\title{Wavelet-Packet-based Noise Signatures With Higher-Order Statistics for Anomaly Prediction}

\author{Indrakshi Dey, Ilias Cherkaoui and Mohamed Khalafalla Hassan\\
Walton Institute, South East Technological University, Waterford, Ireland%
}

\maketitle







\begin{abstract}
This note develops the first-ever noise-centric anomaly prediction method for a fused discrete-time signal. A Wavelet Packet Transform (WPT) provides a time--frequency expansion in which structure and residual can be separated via orthogonal projection. Higher-Order Statistics (HOS), particularly the third-order cumulant (and its bispectral interpretation), quantify non-Gaussianity and nonlinear coupling in the extracted residual. Compact noise signatures are constructed and an analytically calibrated Mahalanobis detector yields a closed-form decision rule with noncentral chi-square performance under mean-shift alternatives. Propositions and proofs establish orthonormality, energy preservation, Gaussian-null behavior of cumulants, and the resulting test statistics. Algorithmic steps and complexity are summarized.
\end{abstract}

\section{Introduction}

In fused wireless sensing and monitoring pipelines (e.g., multi-sensor, multi-antenna, or cross-modal fusion), conventional anomaly detectors often emphasize changes in signal amplitude, energy, or second-order statistics (autocorrelation, power spectrum). However, many subtle anomalies primarily manifest as changes in the \emph{residual micro-structure}---for example, impulsive disturbances, asymmetric fluctuations, or weak nonlinear couplings arising from new scattering objects, device faults, or environmental regime shifts. Such effects can be difficult to detect using only second-order descriptors because second-order statistics are blind to certain forms of non-Gaussianity and higher-order dependence \cite{NikiasPetropulu,Hinich,MendelHOS}.

A complementary viewpoint is to treat ``noise'' not as a nuisance to be removed, but as a \emph{signal-bearing channel} whose statistical structure encodes system state. This motivates a \emph{noise signature} approach: (i) extract residual components using a time--frequency representation with strong localization, and (ii) characterize residual structure using higher-order statistics that vanish under nominal Gaussian assumptions. The present note adopts the \emph{Wavelet Packet Transform (WPT)} for extraction because WPT generalizes the discrete wavelet transform by recursively splitting both approximation and detail subspaces, thereby enabling fine-grained frequency partitioning beyond standard dyadic wavelets \cite{CoifmanWickerhauser,WickerhauserBook,Mallat}. For statistical characterization, we use \emph{Higher-Order Statistics (HOS)}---specifically third-order cumulants and their frequency-domain counterpart, the bispectrum---because they reveal nonlinear interactions and phase coupling and are identically zero for Gaussian processes \cite{NikiasPetropulu,Hinich,MendelHOS}.

Finally, to obtain an analytically calibrated anomaly detector, we model the nominal signature vector as Gaussian and use the \emph{Mahalanobis distance} \cite{Mahalanobis1936,Mardia} with a chi-square threshold, yielding closed-form false-alarm calibration and noncentral chi-square detection performance under mean shift \cite{KayDetection}. For persistence-aware early warning, we optionally integrate \emph{CUmulative SUM (CUSUM)} sequential testing \cite{PageCUSUM}. Surveys of anomaly detection highlight the practical importance of feature design and calibration under nonstationarity \cite{ChandolaSurvey}; our approach emphasizes physically interpretable time--frequency features plus HOS features whose ``null'' behavior is theoretically grounded.

\section{Physical Signal Model}

Let $x[n]\in\R$ denote the fused discrete-time data sequence. We process $x[n]$ in frames of length $N$ samples:
\begin{equation}
\mathbf{x}_m \triangleq \{x[mN+n]\}_{n=0}^{N-1}, \qquad m=0,1,\dots
\end{equation}
Within one frame (index $m$ omitted),
\begin{equation}
x[n] = s[n] + v[n], \qquad n=0,\dots,N-1. \label{eq:additive_model}
\end{equation}
where, $x[n]$ is the fused observable, $s[n]$ is the structured component (coherent waveform content, stable propagation patterns, slowly varying baseline), $v[n]$ is the residual (sensor electronics, channel micro-fluctuations, interference remnants, environmental micro-dynamics). The expectation operator is $\E[\cdot]$ and the variance is $\Var(\cdot)$.

\section{Wavelet Packet Residual Extraction}

Subband decompositions represent a signal as a sum of components localized in \emph{frequency} (via bandpass filtering) and \emph{time} (via shifting localized basis functions). Such representations are attractive when a structured component concentrates in a small subset of time--frequency cells, while residual components distribute more diffusely \cite{Mallat,VetterliKovacevic}. Standard discrete wavelet transforms recursively split only the low-frequency (approximation) branch, producing a fixed dyadic frequency partition. Wavelet packets generalize this by recursively splitting \emph{both} approximation and detail branches, yielding richer orthonormal bases and finer frequency tiling \cite{CoifmanWickerhauser,WickerhauserBook}. This is useful when residual patterns lie in intermediate frequency bands.

Let $\{\psi_{j,k,u}[n]\}$ denote an \emph{orthonormal wavelet packet basis} indexed by scale $j\in\{0,1,\dots,J\}$, node $k\in\{0,1,\dots,2^j-1\}$, and translation $u\in\mathbb{Z}$. The WPT coefficient is
\begin{equation}
w_{j,k}[u] \triangleq \ip{x}{\psi_{j,k,u}}
= \sum_{n=0}^{N-1} x[n]\psi_{j,k,u}[n]. \label{eq:wpt_inner_product}
\end{equation}
Here $(j,k)$ identifies a frequency subband and $u$ identifies time localization. Hence $w_{j,k}[u]$ quantifies the contribution of $x[n]$ within a specific time--frequency region.

An efficient orthonormal implementation of the Wavelet Packet Transform (WPT) is naturally expressed using a two-channel subband filterbank. In this construction, the analysis stage is specified by a \emph{Quadrature Mirror Filter (QMF)} pair consisting of a discrete-time low-pass filter $h[\cdot]$ and a high-pass filter $g[\cdot]$ \cite{VetterliKovacevic}. The term ``mirror'' reflects the complementary spectral roles of the two filters: $h[\cdot]$ extracts low-frequency content while $g[\cdot]$ extracts high-frequency content, and together they enable a lossless (invertible) analysis/synthesis pipeline under suitable design conditions. The key structural property enabling orthonormal wavelet packet atoms is \emph{paraunitarity}. A filterbank is paraunitary when its polyphase matrix is unitary on the unit circle; equivalently, the analysis operator preserves inner products and energy (up to boundary handling), meaning that it is an isometry from the signal space to the coefficient space \cite{VetterliKovacevic,Mallat}. This property is not merely technical: it implies that the wavelet packet functions generated by recursively applying the split inherit orthonormality, so coefficient selection and reconstruction become true orthogonal projections in $\ell_2$, a fact that later supports optimality statements for residual extraction.

The WPT filterbank is realized by repeated filtering and downsampling along a binary decomposition tree. We initialize the root sequence as $w_{0,0}[n]=x[n]$, where $x[n]\in\R$ is the fused frame under analysis. At each level $j$, every node $(j,k)$ is split into two children that represent the low-pass and high-pass subband components of that node. Specifically, the recursion is
\begin{align}
w_{j+1,2k}[u]   &= \sum_{n} h[n-2u]\, w_{j,k}[n], \nonumber\\
w_{j+1,2k+1}[u] &= \sum_{n} g[n-2u]\, w_{j,k}[n], \label{eq:wpt_recursion}
\end{align}
where $j\in\{0,1,\dots,J-1\}$ is the depth (scale), $k\in\{0,1,\dots,2^j-1\}$ indexes the node (subband) at level $j$, and $u$ indexes the downsampled time coordinate within that node. Physically, \eqref{eq:wpt_recursion} can be interpreted as routing the signal through a bank of increasingly narrow bandpass channels: $(j,k)$ selects a frequency interval whose width decreases with $j$, while $u$ tracks the localized content over time. Because wavelet packets recursively split \emph{both} approximation and detail branches, the resulting time--frequency tiling is more flexible than the standard discrete wavelet transform, which splits only the approximation branch; this additional flexibility is useful when residual or anomaly signatures concentrate in intermediate frequency bands that would otherwise remain mixed \cite{CoifmanWickerhauser,WickerhauserBook,Mallat}.

The mathematical backbone of the extraction scheme is the orthonormality of the induced wavelet packet atoms and the resulting energy preservation. Let $\{\psi_{j,k,u}\}$ denote the wavelet packet atoms (basis functions) indexed by depth $j$, node $k$, and translation $u$, so that the WPT coefficients are inner products $w_{j,k}[u]=\langle x,\psi_{j,k,u}\rangle$. Under orthonormal QMF design, these atoms are orthonormal in $\ell_2$, which is stated precisely in the following lemma.

\begin{lemma}[Orthonormality of wavelet packet atoms]
Assume $(h,g)$ form an orthonormal QMF pair generating an orthonormal wavelet packet library. Then
\begin{equation}
\ip{\psi_{j,k,u}}{\psi_{j',k',u'}}=
\delta_{j,j'}\delta_{k,k'}\delta_{u,u'}. \label{eq:orthonormality}
\end{equation}
\end{lemma}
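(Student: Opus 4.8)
The plan is to prove orthonormality by induction on the depth $j$, using the paraunitary (orthonormal QMF) properties of the filter pair $(h,g)$ as the engine that propagates orthonormality from each level to the next.

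The plan is to prove the claim by induction on the decomposition depth $j$, exploiting the fact that the coefficient recursion in \eqref{eq:wpt_recursion} is equivalent to a \emph{synthesis} relation for the atoms themselves. Since $w_{j,k}[u]=\ip{x}{\psi_{j,k,u}}$ holds for every input $x$ and the inner product is bilinear over $\R$, pushing the recursion through the second slot shows that the atoms obey
\begin{align}
\psi_{j+1,2k,u}   &= \sum_n h[n-2u]\,\psi_{j,k,n}, \nonumber\\
\psi_{j+1,2k+1,u} &= \sum_n g[n-2u]\,\psi_{j,k,n}.
\end{align}
The orthonormal QMF hypothesis supplies the three paraunitary identities
\begin{align}
\sum_n h[n-2u]h[n-2u'] &= \delta_{u,u'}, \nonumber\\
\sum_n g[n-2u]g[n-2u'] &= \delta_{u,u'}, \nonumber\\
\sum_n h[n-2u]g[n-2u'] &= 0,
\end{align}
i.e. the even-shift relations encoding unitarity of the polyphase matrix on the unit circle. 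These two facts are the only machinery the argument needs.

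For the base case $j=0$ there is a single node $k=0$ with $w_{0,0}[n]=x[n]$, so $\psi_{0,0,u}[n]=\delta[n-u]$ and $\ip{\psi_{0,0,u}}{\psi_{0,0,u'}}=\delta_{u,u'}$ trivially. For the inductive step I would assume $\ip{\psi_{j,k,n}}{\psi_{j,k',n'}}=\delta_{k,k'}\delta_{n,n'}$ at depth $j$ and expand a level-$(j+1)$ inner product by bilinearity, substituting the synthesis relation on each slot. When the two atoms share a parent and the same filter type (e.g. both $2k$), the induction hypothesis collapses the double sum to $\sum_n h[n-2u]h[n-2u']=\delta_{u,u'}$; for opposite filter types ($2k$ versus $2k+1$) the cross identity forces the result to vanish; and when the parents differ, every surviving term carries the factor $\ip{\psi_{j,k,n}}{\psi_{j,k',n'}}=0$ irrespective of the child filters. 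Collecting the three cases yields $\delta_{k,k'}\delta_{u,u'}$ at depth $j+1$, closing the induction.

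The step I expect to require the most care is the meaning of the factor $\delta_{j,j'}$. Orthonormality is genuinely an intra-depth statement: at a fixed depth the full node set $\{(j,k)\}_{k=0}^{2^j-1}$ tiles the frequency axis disjointly and the induction above delivers a complete orthonormal system, whereas atoms drawn from \emph{different} depths need not be orthogonal, since a parent atom is by construction a linear combination of its two children. I would therefore read the statement as asserting orthonormality within any admissible basis selected from the library (a disjoint cover of the decomposition tree), for which cross-depth comparisons never arise; equivalently, the clean $\delta_{j,j'}\delta_{k,k'}\delta_{u,u'}$ holds when $(j,k)$ and $(j',k')$ lie on a common admissible frontier. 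A secondary technical point is boundary handling for finite $N$: the even-shift identities are stated for bi-infinite sequences, so under circular or symmetric extension I would verify that the corresponding sums retain their Kronecker form on the finite index set, which is precisely the caveat behind the ``up to boundary handling'' remark preceding the lemma.
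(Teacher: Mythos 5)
Your proof is correct and runs on the same engine as the paper's---paraunitarity of the orthonormal QMF pair, propagated recursively along the packet tree---but it is executed at a genuinely different level. The paper argues abstractly: each split is an isometry carrying an orthonormal basis of the parent subspace onto orthonormal bases of two mutually orthogonal child subspaces, and recursion does the rest. You instead make the induction concrete: you correctly derive the two-scale synthesis relation $\psi_{j+1,2k,u}=\sum_n h[n-2u]\,\psi_{j,k,n}$ (and its $g$ counterpart) from the coefficient recursion via bilinearity, invoke the double-shift identities $\sum_n h[n-2u]h[n-2u']=\delta_{u,u'}$, $\sum_n g[n-2u]g[n-2u']=\delta_{u,u'}$, $\sum_n h[n-2u]g[n-2u']=0$, and close the induction with the three-case analysis (same parent/same filter, same parent/opposite filter, distinct parents); this buys verifiability at the cost of length. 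More importantly, your handling of the $\delta_{j,j'}$ factor is a genuine improvement over the paper's proof, which asserts orthonormality ``across all generated atoms'' without qualification: taken literally across depths the claim is false, since $\ip{\psi_{j+1,2k,u}}{\psi_{j,k,n}}=h[n-2u]$ is generically nonzero---a parent atom is by construction a linear combination of its children. Orthonormality holds within a fixed depth, or more generally within any admissible frontier (a disjoint tiling of the decomposition tree), which is exactly the form in which the lemma is used downstream (Proposition~1 invokes only the depth-$J$ frontier). Your closing caveat about finite-$N$ boundary extensions likewise matches the paper's own ``up to boundary handling'' hedge and correctly identifies the one remaining technical gap in both arguments.
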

\begin{proof}
Orthonormal QMFs yield a paraunitary two-channel analysis operator \cite{VetterliKovacevic}. Paraunitarity implies that the analysis operator preserves inner products, i.e., it is an isometry on the underlying subspace at each split. Each splitting step therefore maps an orthonormal basis of the parent subspace into orthonormal bases of two mutually orthogonal child subspaces. Recursively applying the paraunitary split along the packet tree preserves orthonormality across all generated atoms, yielding \eqref{eq:orthonormality} \cite{CoifmanWickerhauser,WickerhauserBook}.
\end{proof}

Orthonormality immediately implies a Parseval-type conservation law for coefficient energy, which is crucial because it makes coefficient partitioning physically interpretable and mathematically well-posed.

\begin{proposition}[Parseval identity for WPT]
For any frame $x\in\R^N$,
\begin{equation}
\norm{x}_2^2 = \sum_{k=0}^{2^J-1}\sum_u \left|w_{J,k}[u]\right|^2. \label{eq:parseval}
\end{equation}
\end{proposition}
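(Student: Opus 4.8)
The plan is to prove the Parseval identity for the WPT by leveraging the orthonormality established in the preceding lemma, treating the wavelet packet atoms at the finest level $J$ as a complete orthonormal basis for the signal space $\R^N$. The key observation is that the coefficients $w_{J,k}[u]$ are precisely the inner-product coordinates $\ip{x}{\psi_{J,k,u}}$, so the identity \eqref{eq:parseval} is just the statement that the squared $\ell_2$-norm equals the sum of squared expansion coefficients — the standard Parseval/Plancherel relation for an orthonormal basis.

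First I would expand $x$ in the level-$J$ wavelet packet basis, writing $x = \sum_{k,u} w_{J,k}[u]\,\psi_{J,k,u}$, where completeness of the full-tree leaf basis guarantees this representation holds for every $x\in\R^N$. Then I would compute $\norm{x}_2^2 = \ip{x}{x}$ by substituting this expansion into both slots of the inner product, yielding a double sum $\sum_{k,u}\sum_{k',u'} w_{J,k}[u]\,\overline{w_{J,k'}[u']}\,\ip{\psi_{J,k,u}}{\psi_{J,k',u'}}$. Applying the orthonormality relation \eqref{eq:orthonormality} collapses the inner product to $\delta_{k,k'}\delta_{u,u'}$, which kills all cross terms and leaves exactly $\sum_{k,u}|w_{J,k}[u]|^2$, establishing the claim. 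Since the signal is real, conjugation is cosmetic.

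The main obstacle — really the only nontrivial point — is justifying completeness of the level-$J$ leaf atoms as a basis (not merely an orthonormal set) of $\R^N$. The lemma supplies orthonormality, but Parseval in the stated form requires that the $2^J$ leaf nodes jointly span the whole space, i.e., that no energy is lost to unrepresented subspaces. I would address this by invoking the recursive paraunitary split: each splitting step decomposes a parent subspace into an orthogonal \emph{direct sum} of its two child subspaces, so by induction the level-$J$ leaves partition the root space $w_{0,0}=x$ into $2^J$ mutually orthogonal subspaces whose direct sum is all of $\R^N$. This dimension-counting/direct-sum argument upgrades orthonormality to completeness and closes the proof. A careful write-up would note that boundary handling (periodic or reflective extension) is assumed so that downsampling preserves total dimension exactly.
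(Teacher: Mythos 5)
Your proof is correct and follows essentially the same route as the paper's: both treat the level-$J$ leaf atoms as an orthonormal basis for $\R^N$ and apply Parseval's identity to the coordinates $w_{J,k}[u]=\ip{x}{\psi_{J,k,u}}$. The only place you go beyond the paper is in explicitly upgrading orthonormality to completeness via the direct-sum induction along the packet tree (with the boundary-handling caveat so downsampling preserves dimension) --- the paper's proof simply asserts that the leaves form a basis ``up to the adopted boundary convention,'' so your extra step tightens a point the paper leaves implicit rather than constituting a different argument.
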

\begin{proof}
By Lemma~1, $\{\psi_{J,k,u}\}$ is an orthonormal basis for the signal space (up to the adopted boundary convention). Expanding $x$ in this orthonormal basis and applying Parseval's identity yields $\norm{x}_2^2=\sum_{k,u}|\langle x,\psi_{J,k,u}\rangle|^2$, which is exactly \eqref{eq:parseval} \cite{Mallat}.
\end{proof}

A conceptually important consequence of Lemma~1 and Proposition~1 is that selecting a subset of coefficients is equivalent to an \emph{orthogonal projection} onto the subspace spanned by the corresponding atoms. In an orthonormal basis, orthogonal projection is not an approximation heuristic; it is the unique minimum-error approximation (in squared $\ell_2$ error) among all reconstructions constrained to a given subspace \cite{Mallat}. This equivalence provides a principled way to define a residual that will be interpreted as ``noise'' for subsequent higher-order characterization.

To formalize residual extraction, we use the additive model $x[n]=s[n]+v[n]$ (within a frame), where $s[n]$ is the structured component and $v[n]$ is the residual component. Because inner products are linear, the WPT coefficients decompose as
\begin{equation}
w_{j,k}[u] = \ip{s}{\psi_{j,k,u}} + \ip{v}{\psi_{j,k,u}}
\triangleq w^{(s)}_{j,k}[u] + w^{(v)}_{j,k}[u]. \label{eq:wpt_linearity}
\end{equation}
We then define a binary coefficient mask $M_{j,k}[u]\in\{0,1\}$ that indicates which coefficients are treated as structure-dominant and are therefore \emph{kept} in the structured reconstruction. This yields the coefficient partition
\begin{equation}
\widehat{w}^{(s)}_{j,k}[u]=M_{j,k}[u]\,w_{j,k}[u],\quad
\widehat{w}^{(v)}_{j,k}[u]=(1-M_{j,k}[u])\,w_{j,k}[u], \label{eq:masking}
\end{equation}
and the corresponding reconstructions via the inverse WPT,
\begin{equation}
\hat{s}[n]=\sum_{j,k,u}\widehat{w}^{(s)}_{j,k}[u]\psi_{j,k,u}[n],\;
\hat{v}[n]=\sum_{j,k,u}\widehat{w}^{(v)}_{j,k}[u]\psi_{j,k,u}[n]. \label{eq:reconstruction}
\end{equation}
The following proposition states the optimality of this masking-based reconstruction in the squared-error sense.

\begin{proposition}[Projection optimality of hard masking]
Let $\Omega\triangleq\{(j,k,u): M_{j,k}[u]=1\}$ and $\mathcal{U}\triangleq\mathrm{span}\{\psi_{j,k,u}:(j,k,u)\in\Omega\}$. Then $\hat{s}$ defined in \eqref{eq:reconstruction} is the orthogonal projection of $x$ onto $\mathcal{U}$, and the residual $\hat{v}=x-\hat{s}$ minimizes $\norm{x-\tilde{s}}_2^2$ over all $\tilde{s}\in\mathcal{U}$.
\end{proposition}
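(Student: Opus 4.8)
The plan is to recognize $\hat{s}$ as exactly the canonical orthogonal-projection formula for an orthonormal system, and then to close the optimality claim with the Pythagorean identity. Throughout I would work within a fixed admissible orthonormal wavelet packet basis (for instance the terminal-level atoms $\{\psi_{J,k,u}\}$ underlying Proposition~1, or any admissible disjoint cover of the decomposition tree), so that the atoms indexing the kept coefficients form a complete orthonormal system and Lemma~1 applies verbatim to every subcollection of them.

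First I would rewrite $\hat{s}$ in projection form. Since $\widehat{w}^{(s)}_{j,k}[u]=M_{j,k}[u]\,w_{j,k}[u]$ vanishes off $\Omega$ and equals $w_{j,k}[u]=\langle x,\psi_{j,k,u}\rangle$ on $\Omega$, the reconstruction in \eqref{eq:reconstruction} collapses to
\[
\hat{s}=\sum_{(j,k,u)\in\Omega}\langle x,\psi_{j,k,u}\rangle\,\psi_{j,k,u},
\]
which is manifestly an element of $\mathcal{U}$; this secures the membership $\hat{s}\in\mathcal{U}$ that any projection must satisfy.

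Next I would verify the orthogonality condition that characterizes a projection, namely $x-\hat{s}\perp\mathcal{U}$. It suffices to test against each spanning atom $\psi_{j',k',u'}$ with $(j',k',u')\in\Omega$. By bilinearity of the inner product and the Kronecker orthonormality of Lemma~1, every cross term $\langle\psi_{j,k,u},\psi_{j',k',u'}\rangle$ over $(j,k,u)\in\Omega$ collapses to the single index-matched survivor, giving $\langle\hat{s},\psi_{j',k',u'}\rangle=w_{j',k'}[u']=\langle x,\psi_{j',k',u'}\rangle$ and hence $\langle x-\hat{s},\psi_{j',k',u'}\rangle=0$. Since these atoms span $\mathcal{U}$, linearity extends the orthogonality to all of $\mathcal{U}$, so $\hat{s}$ is the orthogonal projection of $x$ onto $\mathcal{U}$.

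Finally, for the minimization I would take an arbitrary $\tilde{s}\in\mathcal{U}$, decompose $x-\tilde{s}=(x-\hat{s})+(\hat{s}-\tilde{s})$, and note that $\hat{s}-\tilde{s}\in\mathcal{U}$ while $x-\hat{s}\perp\mathcal{U}$; the Pythagorean identity then gives $\norm{x-\tilde{s}}_2^2=\norm{x-\hat{s}}_2^2+\norm{\hat{s}-\tilde{s}}_2^2\ge\norm{x-\hat{s}}_2^2$, with equality iff $\tilde{s}=\hat{s}$. Equivalently, writing $\tilde{s}=\sum_{(j,k,u)\in\Omega}c_{j,k,u}\,\psi_{j,k,u}$, expanding $x$ in the full basis, and invoking the Parseval identity of Proposition~1 yields
\[
\norm{x-\tilde{s}}_2^2=\sum_{(j,k,u)\in\Omega}\bigl|w_{j,k}[u]-c_{j,k,u}\bigr|^2+\sum_{(j,k,u)\notin\Omega}\bigl|w_{j,k}[u]\bigr|^2,
\]
which is visibly minimized by matching the coefficients on $\Omega$. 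The argument is essentially the Hilbert-space projection theorem specialized to an orthonormal basis, so no deep obstacle arises; the one point demanding care is the indexing convention, since $\Omega$ must select atoms from a single orthonormal basis rather than an overcomplete union across incomparable tree levels, so that Lemma~1 genuinely forces the cross terms to vanish. This is precisely where the boundary-convention caveat attached to Proposition~1 quietly re-enters.
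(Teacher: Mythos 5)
Your proof is correct and follows essentially the same route as the paper's: the paper simply asserts that zeroing the coefficients outside $\Omega$ yields the orthogonal projection onto $\mathcal{U}$ and invokes the minimum-squared-error characterization (citing Mallat), while you fill in the standard details---the orthogonality check against spanning atoms via Lemma~1 and the Pythagorean identity. Your closing caveat that $\Omega$ must index atoms from a single orthonormal basis (not an overcomplete union across tree levels) is a valid point the paper leaves implicit in its notation, but it does not change the argument.
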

\begin{proof}
Because $\{\psi_{j,k,u}\}$ is orthonormal, the expansion coefficients $\{w_{j,k}[u]\}$ are coordinates of $x$ in that orthonormal basis. Keeping precisely the coefficients indexed by $\Omega$ and setting the rest to zero yields the orthogonal projection of $x$ onto the span $\mathcal{U}$ (i.e., the unique element of $\mathcal{U}$ whose error is orthogonal to $\mathcal{U}$). Orthogonal projections are characterized by the minimum squared-error property, so $\hat{s}$ minimizes $\norm{x-\tilde{s}}_2^2$ over $\tilde{s}\in\mathcal{U}$, and the residual $\hat{v}=x-\hat{s}$ is the minimum-energy leftover induced by that subspace constraint \cite{Mallat}.
\end{proof}

The mask $M_{j,k}[u]$ must be constructed from data, and a standard, theoretically motivated approach is thresholding inspired by wavelet shrinkage. The core shrinkage intuition is that, under Gaussian noise, most noise coefficients in an orthonormal transform are small and symmetrically distributed around zero, while structured components are more likely to create large-magnitude coefficients; therefore, discarding small coefficients suppresses noise while retaining structure \cite{DonohoJohnstone,AddisonWavelets}. This motivates the threshold mask
\begin{equation}
M_{j,k}[u] = \ind\left\{|w_{j,k}[u]|>\lambda_{j,k}\right\}, \label{eq:threshold_mask}
\end{equation}
where $\lambda_{j,k}>0$ may be node-dependent (band-adaptive) or shared across nodes. Under a nominal residual model, we assume the time-domain residual samples are independent and identically distributed (i.i.d.) Gaussian:
\begin{equation}
v[n]\sim\mathcal{N}(0,\sigma^2)\ \text{i.i.d.} \label{eq:noise_gaussian}
\end{equation}
where $\sigma^2$ is the residual variance. Under orthonormality, each pure-noise coefficient is itself Gaussian with the same variance, which is formalized as follows.

\begin{proposition}[Gaussianity preserved under orthonormal WPT]
Under \eqref{eq:noise_gaussian} and orthonormal $\{\psi_{j,k,u}\}$,
\begin{equation}
w^{(v)}_{j,k}[u]=\ip{v}{\psi_{j,k,u}}\sim\mathcal{N}(0,\sigma^2). \label{eq:gaussian_coeffs}
\end{equation}
\end{proposition}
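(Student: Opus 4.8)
The plan is to recognize $w^{(v)}_{j,k}[u]$ as a fixed deterministic linear functional applied to the random noise vector and then invoke the closure of the Gaussian family under linear maps. First I would write the coefficient explicitly, as in \eqref{eq:wpt_inner_product}, in the form
\begin{equation}
w^{(v)}_{j,k}[u]=\ip{v}{\psi_{j,k,u}}=\sum_{n=0}^{N-1} v[n]\,\psi_{j,k,u}[n],
\end{equation}
so that the atom entries $\psi_{j,k,u}[n]$ act as fixed (nonrandom) coefficients and the only randomness enters through the samples $v[n]$.

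The key step is to note that under \eqref{eq:noise_gaussian} the samples $v[0],\dots,v[N-1]$ are independent $\mathcal{N}(0,\sigma^2)$ variables and therefore \emph{jointly} Gaussian; a deterministic linear combination of jointly Gaussian variables is again Gaussian, which immediately establishes that $w^{(v)}_{j,k}[u]$ is Gaussian. It then remains only to identify its two parameters. The mean follows from linearity of $\E[\cdot]$, giving $\E[w^{(v)}_{j,k}[u]]=\sum_{n}\psi_{j,k,u}[n]\,\E[v[n]]=0$, since each $v[n]$ is zero-mean.

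For the variance I would use independence to annihilate all cross terms, so that
\begin{equation}
\Var\!\big(w^{(v)}_{j,k}[u]\big)=\sum_{n=0}^{N-1}\psi_{j,k,u}[n]^2\,\Var(v[n])=\sigma^2\,\norm{\psi_{j,k,u}}_2^2,
\end{equation}
and then invoke Lemma~1 with coincident indices $(j',k',u')=(j,k,u)$, which forces $\norm{\psi_{j,k,u}}_2^2=1$ and hence $\Var(w^{(v)}_{j,k}[u])=\sigma^2$. Combining the three facts yields $w^{(v)}_{j,k}[u]\sim\mathcal{N}(0,\sigma^2)$, exactly \eqref{eq:gaussian_coeffs}.

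The only point requiring care --- rather than a genuine obstacle --- is to use joint (not merely marginal) Gaussianity when asserting that the linear combination is Gaussian, since a linear combination of marginally Gaussian but dependent variables need not be Gaussian in general. This is precisely what the independence in \eqref{eq:noise_gaussian} supplies, and the very same independence is what collapses the variance to the squared atom norm, which Lemma~1 then normalizes to unity.
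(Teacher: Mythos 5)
Your proposal is correct and follows essentially the same route as the paper's own proof: express the coefficient as a deterministic linear combination of the jointly Gaussian samples $v[n]$, conclude Gaussianity, and compute the variance via independence (additivity) plus the unit norm $\norm{\psi_{j,k,u}}_2^2=1$ from orthonormality. Your added remarks --- the explicit zero-mean computation and the caution that \emph{joint} Gaussianity (not merely marginal) is what licenses the closure under linear maps --- are sound refinements of the same argument, not a different approach.
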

\begin{proof}
$w^{(v)}_{j,k}[u]$ is a linear combination of the jointly Gaussian samples $\{v[n]\}$ and is therefore Gaussian. Moreover,
\begin{align}
    &\Var\!\big(w^{(v)}_{j,k}[u]\big)=\Var\!\Big(\sum_n v[n]\psi_{j,k,u}[n]\Big)\nonumber\\
&=\sum_n \Var(v[n])\,\psi_{j,k,u}^2[n]
=\sigma^2\norm{\psi_{j,k,u}}_2^2
=\sigma^2,
\end{align}
where independence gives additivity of variances and orthonormality gives $\norm{\psi_{j,k,u}}_2^2=1$. Hence \eqref{eq:gaussian_coeffs} holds.
\end{proof}

A classical threshold choice is the universal threshold
\begin{equation}
\lambda = \sigma\sqrt{2\log N}, \label{eq:universal_threshold}
\end{equation}
which suppresses most noise-only coefficients with high probability for length-$N$ sequences under Gaussian assumptions \cite{DonohoJohnstone}. Physically, this thresholding rule operationalizes the notion that diffuse residual fluctuations populate many time--frequency cells weakly, while coherent components populate fewer cells strongly; the complement set of coefficients (those not retained) forms a residual that concentrates subtle micro-structure and is therefore a natural substrate for anomaly-sensitive statistical characterization.

\section{Higher-Order Statistical Characterization}

Higher-Order Statistics (HOS) provide a principled way to describe statistical structure that is invisible to second-order descriptors such as autocorrelation and power spectral density. While second-order statistics fully characterize a \emph{Gaussian} random process, many real residual processes encountered after fusion and projection (e.g., interference remnants, micro-scattering fluctuations, hardware nonlinearities, impulsive disturbances) are not strictly Gaussian and may exhibit nonlinear dependence across time and frequency. HOS address this gap by using moments and cumulants of order greater than two, which respond to asymmetry, heavy tails, and nonlinear coupling. The central theoretical fact enabling anomaly detection is that all cumulants of order greater than two vanish identically for Gaussian processes; thus, higher-order cumulants provide a \emph{Gaussian null} against which departures can be measured in a mathematically interpretable manner \cite{NikiasPetropulu,MendelHOS,Hinich}. In the intended circular logic of the method, the residual is treated as ``noise'' because it is orthogonal leftover energy after projection, and it is simultaneously validated as informative noise because its higher-order structure (which should vanish under nominal Gaussian-like behavior) becomes a sensitive indicator of anomaly-driven departures.

We apply HOS to the extracted residual in the transform domain because the transform domain provides localized time--frequency views that can isolate subtle dependence mechanisms. Specifically, for each wavelet packet node $(j,k)$ and translation index $u$, we analyze the residual coefficients
\begin{equation}
z_{j,k}[u] \triangleq \widehat{w}^{(v)}_{j,k}[u], \label{eq:noise_coeff_def}
\end{equation}
where $\widehat{w}^{(v)}_{j,k}[u]$ denotes the coefficients assigned to the residual component by the masking rule. Here, $j$ indexes the decomposition depth (which controls frequency resolution), $k$ indexes the subband within level $j$, and $u$ indexes time localization within that subband. Physically, $z_{j,k}[u]$ represents the residual as observed through a specific time--frequency ``lens'': if an anomaly induces weak but structured changes (e.g., intermittent bursts or nonlinear mixing), those changes may become more coherent within certain nodes, even when they are not obvious in the raw time sequence.

To quantify such structure, we focus on the third-order cumulant, which is a canonical HOS object that detects third-order dependence and asymmetry (skew-like behavior). Assuming that the coefficient sequence $z[u]$ is approximately stationary within a frame (a standard local-stationarity approximation in frame-based processing), the third-order cumulant is defined as
\begin{align}
C_3(\tau_1,\tau_2) &\triangleq
\E[z[u]z[u+\tau_1]z[u+\tau_2]] \nonumber\\
&-\E[z[u]]\E[z[u+\tau_1]z[u+\tau_2]], \label{eq:third_cumulant}
\end{align}
where $\tau_1,\tau_2\in\mathbb{Z}$ are time lags and $\E[\cdot]$ denotes expectation. The first term in \eqref{eq:third_cumulant} measures the average triple-product coupling across lagged samples, while the second term removes contributions due to nonzero mean (it forms the cumulant rather than the raw moment). When $\E[z[u]]=0$, the correction vanishes and $C_3(\tau_1,\tau_2)$ reduces to the third-order moment $\E[z[u]z[u+\tau_1]z[u+\tau_2]]$. The cumulant form is preferred in practice because it isolates genuine third-order dependence from trivial mean-induced effects.

The key baseline result is the Gaussian null: if the residual coefficients are Gaussian (even if correlated), then the third-order cumulant is identically zero for all lags. This property turns $C_3(\tau_1,\tau_2)$ into a direct test statistic for non-Gaussianity and nonlinear dependence.

\begin{proposition}[Gaussian null for third-order cumulant]
If $z[u]$ is Gaussian, then
\begin{equation}
C_3(\tau_1,\tau_2)=0,\quad \forall\,\tau_1,\tau_2. \label{eq:gaussian_null}
\end{equation}
\end{proposition}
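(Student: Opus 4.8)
The plan is to recognize $C_3(\tau_1,\tau_2)$ as the joint third-order cumulant of the triple $\big(z[u],\,z[u+\tau_1],\,z[u+\tau_2]\big)$ and to exploit the defining feature of the Gaussian law that every cumulant of total order greater than two vanishes. The first thing I would pin down is the meaning of the hypothesis: ``$z[u]$ is Gaussian'' should be read as \emph{$z$ is a Gaussian process}, so that any finite subcollection of its samples is \emph{jointly} Gaussian with some mean vector $\boldsymbol\mu$ and covariance $\Sigma$. I would state this joint Gaussianity explicitly at the outset, because it --- and not mere marginal Gaussianity --- is the property that each subsequent step actually consumes.

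For the elementary route that matches the definition in \eqref{eq:third_cumulant}, I would first use the fact that the nominal residual is zero-mean, so $\E[z[u]]=0$ and the correction term in \eqref{eq:third_cumulant} drops out, leaving $C_3(\tau_1,\tau_2)=\E\big[z[u]\,z[u+\tau_1]\,z[u+\tau_2]\big]$. This is the expectation of a product of an odd number (three) of centered jointly Gaussian variables. By the reflection symmetry $z\mapsto -z$ of the centered Gaussian law --- equivalently, by Isserlis'/Wick's theorem, under which the expectation expands into a sum over pairings of the three factors, none of which can exhaust an odd number of terms --- this third-order moment is identically zero, for every pair of lags.

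To connect with the paper's broader narrative that \emph{all} higher cumulants vanish, I would also sketch the cumulant-generating-function argument, which handles arbitrary mean uniformly. Writing the joint characteristic function as $\Phi(\mathbf t)=\exp\!\big(i\,\boldsymbol\mu^{\!\top}\mathbf t-\tfrac12\,\mathbf t^{\!\top}\Sigma\,\mathbf t\big)$, the cumulant generating function $\log\Phi(\mathbf t)$ is a polynomial of degree two in $\mathbf t$; since joint cumulants are the mixed partial derivatives of $\log\Phi$ at the origin, every cumulant of total order three or more is a derivative of order $\ge 3$ of a quadratic and hence vanishes. Specializing to the third joint cumulant reproduces \eqref{eq:gaussian_null}.

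The computation is routine, so the proof is short; the genuine care-points are two. First, I must insist on joint rather than marginal Gaussianity, since both the vanishing of odd moments and the quadratic form of $\log\Phi$ fail for variables that are only marginally Gaussian. Second, I would note that the single-term mean correction in \eqref{eq:third_cumulant} reproduces the true third cumulant only when $\E[z[u]]=0$; for a nonzero-mean Gaussian it leaves a residual proportional to the mean, so the zero-mean assumption supplied by the residual model \eqref{eq:noise_gaussian}--\eqref{eq:gaussian_coeffs} is exactly what makes the stated identity exact.
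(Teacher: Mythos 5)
Your proof is correct, and it is genuinely more than what the paper does: the paper's proof is a one-line appeal to the standard fact that all cumulants of order greater than two vanish for Gaussian processes, citing the HOS literature, whereas you actually supply two self-contained derivations of that fact --- the odd-moment/Isserlis argument for the zero-mean case, and the cumulant-generating-function argument (quadratic $\log\Phi$, hence all mixed partials of order $\ge 3$ vanish) for arbitrary mean. Your second route is essentially a proof of the very citation the paper leans on, so nothing is lost, and your insistence on \emph{joint} Gaussianity makes explicit a hypothesis the paper leaves implicit in the phrase ``$z[u]$ is Gaussian.'' More importantly, you caught a subtlety the paper's proof glosses over: the object defined in \eqref{eq:third_cumulant} subtracts only the single correction $\E[z[u]]\,\E[z[u+\tau_1]z[u+\tau_2]]$, not the full symmetric set of mean corrections, so it coincides with the true third joint cumulant only when $\E[z[u]]=0$. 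Indeed, for a stationary Gaussian process with mean $\mu$ and covariances $R_{12}=\mathrm{Cov}(z[u],z[u+\tau_1])$, $R_{13}=\mathrm{Cov}(z[u],z[u+\tau_2])$, the quantity in \eqref{eq:third_cumulant} evaluates to $\mu(R_{12}+R_{13})$, which is generally nonzero --- so the paper's blanket invocation of ``cumulants vanish'' proves the proposition as stated only under the zero-mean nominal model \eqref{eq:noise_gaussian}--\eqref{eq:gaussian_coeffs}, exactly the condition you identified and invoked. In short: same conclusion, but your argument is more elementary, more careful about the mean-correction discrepancy, and self-contained where the paper's is citation-based.
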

\begin{proof}
A defining property of Gaussian random processes is that all cumulants of order greater than two vanish identically. Therefore, the third-order cumulant is zero for every lag pair $(\tau_1,\tau_2)$ \cite{NikiasPetropulu,MendelHOS}.
\end{proof}

Beyond the lag domain, third-order dependence is often interpreted through its frequency-domain representation, the bispectrum. The bispectrum is defined as the two-dimensional Fourier transform of the third-order cumulant:
\begin{equation}
B(\omega_1,\omega_2)=\sum_{\tau_1}\sum_{\tau_2}
C_3(\tau_1,\tau_2)e^{-j(\omega_1\tau_1+\omega_2\tau_2)}, \label{eq:bispectrum}
\end{equation}
where $\omega_1$ and $\omega_2$ are angular frequencies. The bispectrum is particularly informative because nonzero $B(\omega_1,\omega_2)$ indicates \emph{quadratic phase coupling}, meaning that spectral components at $\omega_1$ and $\omega_2$ interact to produce coherent structure at $\omega_1+\omega_2$. In physical terms, quadratic coupling can arise from nonlinear mixing mechanisms (e.g., device nonlinearity, multiplicative channel effects, or scattering interactions) that are not captured by second-order spectra, making the bispectrum a natural diagnostic for anomaly-induced nonlinearities \cite{NikiasPetropulu,Hinich}.

In practice, $C_3(\tau_1,\tau_2)$ is not known and must be estimated from finite data. The standard approach replaces ensemble expectations by time averages under the assumption that the process is stationary and ergodic (or satisfies standard mixing conditions), which ensures that time averages converge to expectations as the number of samples grows \cite{NikiasPetropulu,MendelHOS}. Given a finite sequence $z[u]$ of length $L$, indexed by $u=0,\dots,L-1$, we first remove the sample mean to control bias from nonzero mean:
\[
\tilde{z}[u]\triangleq z[u]-\bar{z},\qquad \bar{z}\triangleq \frac{1}{L}\sum_{u=0}^{L-1} z[u].
\]
A commonly used (biased but consistent) estimator of the third-order cumulant is then
\begin{align}
    \widehat{C}_3(\tau_1,\tau_2) &=
\frac{1}{L'}\sum_{u=0}^{L'-1}\tilde{z}[u]\tilde{z}[u+\tau_1]\tilde{z}[u+\tau_2],\nonumber\\
L'&= L-\max(\tau_1,\tau_2), \label{eq:cumulant_estimator}
\end{align}
where $L'$ ensures that all lagged indices remain within the observed window. The estimator \eqref{eq:cumulant_estimator} is interpreted as an empirical average of the triple-product sequence, which is itself a stationary process under stationarity of $z[u]$.

\begin{proposition}[Consistency under ergodicity/mixing]
If $z[u]$ is stationary and ergodic, then for fixed $(\tau_1,\tau_2)$,
\begin{equation}
\widehat{C}_3(\tau_1,\tau_2)\xrightarrow[L\to\infty]{p} C_3(\tau_1,\tau_2). \label{eq:consistency}
\end{equation}
\end{proposition}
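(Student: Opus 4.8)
The plan is to recognize that the estimator $\widehat{C}_3(\tau_1,\tau_2)$ is, up to the mean-centering correction and the finite-window normalization $L'$, the empirical time-average of the triple-product sequence
\[
y_{\tau_1,\tau_2}[u]\triangleq z[u]\,z[u+\tau_1]\,z[u+\tau_2].
\]
The key observation is that if $z[u]$ is stationary and ergodic, then for each fixed lag pair $(\tau_1,\tau_2)$ the derived sequence $y_{\tau_1,\tau_2}[u]$ is itself stationary (it is a fixed measurable function of a finite shift-block of $z$) and inherits ergodicity from $z$ via the standard fact that measurable functions of an ergodic process remain ergodic. This reduces the claim to an application of the ergodic theorem (or the weak law of large numbers for ergodic stationary sequences) to $y_{\tau_1,\tau_2}[u]$.

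First I would fix $(\tau_1,\tau_2)$ and write the raw (uncentered) estimator $\frac{1}{L'}\sum_{u=0}^{L'-1} z[u]z[u+\tau_1]z[u+\tau_2]$, and invoke the ergodic theorem to conclude that this time average converges (in probability, indeed almost surely) to $\E[y_{\tau_1,\tau_2}[u]] = \E[z[u]z[u+\tau_1]z[u+\tau_2]]$ as $L\to\infty$. Since $L'=L-\max(\tau_1,\tau_2)$ and $\tau_1,\tau_2$ are fixed, $L'\to\infty$ and $L'/L\to 1$, so the finite-window truncation is asymptotically negligible. Second, I would handle the mean-centering: the sample mean $\bar z\xrightarrow{p}\E[z]$ by the same ergodic theorem applied to $z$ itself, so by the continuous-mapping theorem and Slutsky's lemma the correction terms in $\widehat{C}_3$ (which are polynomials in $\bar z$ and in time-averages of products of fewer than three factors) converge to exactly the mean-induced terms that convert the raw third moment into the third-order cumulant \eqref{eq:third_cumulant}. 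Combining these limits yields $\widehat{C}_3(\tau_1,\tau_2)\xrightarrow{p} C_3(\tau_1,\tau_2)$.

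The main obstacle is not the convergence machinery but the integrability requirement: the ergodic theorem (in its $L^1$/a.s. form) needs $\E\big|z[u]z[u+\tau_1]z[u+\tau_2]\big|<\infty$, i.e., finite third absolute moments, and the mean-correction terms implicitly require finite lower-order moments as well. I would therefore state finite sixth-order (or at least finite third absolute) moment conditions as a standing assumption, noting that this is automatic under the nominal Gaussian residual model of \eqref{eq:noise_gaussian} and is the standard hypothesis in the HOS estimation literature \cite{NikiasPetropulu,MendelHOS}. A secondary, more technical point worth flagging is that pointwise (in $u$) stationarity and ergodicity of $z$ must genuinely transfer to the vector process $(z[u],z[u+\tau_1],z[u+\tau_2])$; this follows because any fixed finite-lag sampling is a factor map commuting with the shift, but I would mention it explicitly rather than treat it as obvious.

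\begin{proof}
Fix the lag pair $(\tau_1,\tau_2)$ and assume $z[u]$ has finite third absolute moments (automatic under the Gaussian model \eqref{eq:noise_gaussian}). Define the triple-product sequence $y[u]\triangleq z[u]z[u+\tau_1]z[u+\tau_2]$. Since $y[u]$ is a fixed measurable function of a finite shift-block of $z$, stationarity and ergodicity of $z$ imply the same for $y$. By the ergodic theorem, the raw time average satisfies
\[
\frac{1}{L'}\sum_{u=0}^{L'-1} y[u]\xrightarrow[L\to\infty]{p}\E[y[u]]=\E[z[u]z[u+\tau_1]z[u+\tau_2]],
\]
using $L'=L-\max(\tau_1,\tau_2)\to\infty$ with $L'/L\to 1$. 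Applying the ergodic theorem to $z$ gives $\bar z\xrightarrow{p}\E[z]$, and likewise the lower-order product averages appearing in the mean-centering converge to their expectations. By the continuous-mapping theorem and Slutsky's lemma, the centering terms converge to exactly the correction $\E[z[u]]\,\E[z[u+\tau_1]z[u+\tau_2]]$ of \eqref{eq:third_cumulant}. Combining these limits yields \eqref{eq:consistency}.
\end{proof}
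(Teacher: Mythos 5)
Your proof follows essentially the same route as the paper's: both reduce the claim to the ergodic theorem applied to the triple-product sequence $y[u]$, whose stationarity and ergodicity are inherited from $z[u]$, so that the time average in \eqref{eq:cumulant_estimator} converges to $\E[y[u]]=C_3(\tau_1,\tau_2)$. Yours is in fact slightly tighter than the paper's version --- the paper applies the ergodic theorem directly to the $\bar z$-centered products $\tilde z[u]\tilde z[u+\tau_1]\tilde z[u+\tau_2]$, which are not strictly stationary because $\bar z$ is sample-dependent, whereas you correctly separate the raw third moment from the centering terms (handled via $\bar z\xrightarrow{p}\E[z]$ and Slutsky) and make explicit the finite-third-absolute-moment hypothesis that the paper leaves implicit.
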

\begin{proof}
Define the derived sequence $y[u]\triangleq \tilde{z}[u]\tilde{z}[u+\tau_1]\tilde{z}[u+\tau_2]$. Under stationarity of $z[u]$, the sequence $y[u]$ is stationary. The estimator $\widehat{C}_3(\tau_1,\tau_2)$ in \eqref{eq:cumulant_estimator} is a time average of $y[u]$ over $u=0,\dots,L'-1$. By ergodicity (or standard mixing conditions), time averages converge in probability to ensemble expectations, hence $\widehat{C}_3(\tau_1,\tau_2)\to \E[y[u]]$. The limit $\E[y[u]]$ equals the cumulant definition (with mean correction achieved by centering), which is $C_3(\tau_1,\tau_2)$ \cite{NikiasPetropulu}.
\end{proof}

For practical anomaly prediction, it is often preferable to compress the lag-dependent object $\widehat{C}_3(\tau_1,\tau_2)$ into a small number of stable features. Let $\mathcal{T}$ denote a predetermined (small) set of lag pairs chosen to capture informative short-range dependence. For each wavelet packet node $(j,k)$, we define the cumulant-energy feature
\begin{equation}
\mathcal{E}_{j,k} \triangleq \sum_{(\tau_1,\tau_2)\in\mathcal{T}}
\left|\widehat{C}_{3,j,k}(\tau_1,\tau_2)\right|^2, \label{eq:cumulant_energy}
\end{equation}
which measures the total third-order structure within the selected lag set. Because cumulant magnitude can scale with residual variance, a normalized variant improves robustness to amplitude scaling and operating-regime changes. We therefore define
\begin{equation}
\mathcal{N}_{j,k} \triangleq \sum_{(\tau_1,\tau_2)\in\mathcal{T}}
\frac{\left|\widehat{C}_{3,j,k}(\tau_1,\tau_2)\right|^2}{\left(\widehat{R}_{j,k}(0)\right)^3+\epsilon},\ 
\widehat{R}_{j,k}(0)\triangleq \frac{1}{L}\sum_u \tilde{z}_{j,k}^2[u], \label{eq:normalized_cumulant_energy}
\end{equation}
where $\widehat{R}_{j,k}(0)$ is the estimated zero-lag autocorrelation (a variance proxy) of the residual coefficients in node $(j,k)$, and $\epsilon>0$ is a small constant preventing division by zero. The cubic normalization reflects the fact that third-order statistics scale roughly with the cube of amplitude; thus, $\mathcal{N}_{j,k}$ behaves like a dimensionless indicator of higher-order structure rather than a raw magnitude that can be dominated by simple variance changes. Together, $\{\mathcal{E}_{j,k},\mathcal{N}_{j,k}\}$ provide compact, physically interpretable descriptors: when the residual behaves like nominal Gaussian noise, these features concentrate near zero due to the Gaussian null; when anomalies induce asymmetry, impulsiveness, or nonlinear coupling in the residual, these features increase in a node-selective manner, providing discriminative signatures for detection and prediction.

\section{Noise Signature and Anomaly Detection}

A \emph{signature} is constructed to convert high-dimensional residual behavior into a compact feature vector whose \emph{nominal} distribution is stable and whose \emph{deviations} are interpretable as evidence of anomalous conditions. The stability requirement is essential: if the feature distribution drifts substantially under normal operating variability, then any threshold-based detector will either produce excessive false alarms or become insensitive. The informativeness requirement is equally essential: the signature must preserve the specific statistical footprints that anomalies imprint on the residual. In the present framework, the signature is deliberately built to encode both \emph{where} residual energy resides in the time--frequency plane and \emph{how} the residual behaves statistically within those regions. The ``where'' aspect is captured by node-wise time--frequency energy descriptors derived from the wavelet packet representation, while the ``how'' aspect is captured by higher-order descriptors that measure departures from Gaussian-like behavior (e.g., third-order cumulant energy and its normalized variant). This combination aligns with general anomaly-detection principles: robust detection often requires both a localization mechanism that isolates informative subspaces and a statistical mechanism that tests whether behavior within those subspaces matches a nominal model \cite{ChandolaSurvey}.

Let $\mathcal{V}$ denote the set of wavelet packet nodes selected for analysis. Here, a \emph{node} refers to the time--frequency subband indexed by depth $j$ and subband index $k$ in the wavelet packet tree. For each node $(j,k)\in\mathcal{V}$, we first define the node energy
\begin{equation}
E_{j,k}\triangleq \sum_u |w_{j,k}[u]|^2, \label{eq:node_energy}
\end{equation}
where $w_{j,k}[u]$ are the wavelet packet coefficients and $u$ indexes translations (time positions) within the node. Because the wavelet packet basis is orthonormal under paraunitary design, $E_{j,k}$ has a clear physical interpretation: it is the portion of the signal energy that lies in the time--frequency tile corresponding to $(j,k)$ (modulo boundary effects), and the sum over all nodes recovers the total energy by Parseval-type arguments. In other words, $\{E_{j,k}\}$ forms a coarse map of how the fused signal's energy is distributed across frequency bands at the chosen resolution, providing a localization cue for anomalies that preferentially excite specific subbands.

Energy alone, however, is insufficient to detect anomalies that are subtle in amplitude but strong in statistical structure. To capture these effects, we incorporate higher-order features computed from the extracted residual coefficients in each node. Denote by $\mathcal{E}_{j,k}$ the cumulant-energy feature and by $\mathcal{N}_{j,k}$ its scale-normalized counterpart, both defined previously from third-order cumulant estimates. The feature vector for one frame is then formed by stacking node-wise descriptors across all analyzed nodes:
\begin{equation}
\mathbf{f}\triangleq
\Big[\{E_{j,k}\}_{(j,k)\in\mathcal{V}},\ \{\mathcal{E}_{j,k}\}_{(j,k)\in\mathcal{V}},\ \{\mathcal{N}_{j,k}\}_{(j,k)\in\mathcal{V}}\Big]^T
\in\R^d, \label{eq:feature_vector}
\end{equation}
where $d$ is the total feature dimension. The stacked structure in \eqref{eq:feature_vector} is intentional. First, it preserves interpretability: the energy block indicates the residual’s time--frequency footprint, while the HOS blocks indicate whether the residual within each footprint is consistent with a Gaussian-like null or exhibits higher-order structure. Second, it supports robust statistical calibration: by treating the signature as a multivariate random vector, we can incorporate cross-feature correlations that naturally arise because neighboring wavelet packet nodes and neighboring lag-based HOS summaries are not statistically independent.

To convert the signature $\mathbf{f}$ into an anomaly score, we use the \emph{Mahalanobis distance}, which measures deviation from a nominal mean in units of nominal covariance and thereby accounts for correlation among features \cite{Mahalanobis1936,Mardia}. This is critical because naive Euclidean thresholding implicitly assumes isotropic feature variability; if some features vary widely under nominal operation while others are tightly concentrated, Euclidean distance can overreact to benign fluctuations and underreact to meaningful deviations. Mahalanobis distance resolves this by ``whitening'' the feature space with the inverse covariance, effectively normalizing each direction by its nominal variance and decorrelating coupled features.

Formally, assume that under nominal (non-anomalous) conditions the signature vector is approximately Gaussian:
\begin{equation}
\mathbf{f}\sim \mathcal{N}(\mathbf{\mu},\mathbf{\Sigma}), \label{eq:nominal_gaussian}
\end{equation}
where $\mathbf{\mu}\in\R^d$ is the nominal mean and $\mathbf{\Sigma}\in\R^{d\times d}$ is the nominal covariance matrix, assumed positive definite ($\mathbf{\Sigma}\succ 0$). The squared Mahalanobis distance is defined as
\begin{equation}
D^2(\mathbf{f}) \triangleq (\mathbf{f}-\mathbf{\mu})^T\mathbf{\Sigma}^{-1}(\mathbf{f}-\mathbf{\mu}). \label{eq:mahalanobis}
\end{equation}
The quantity $D^2(\mathbf{f})$ is a quadratic form that measures how many ``standard deviations'' the observation lies away from the nominal center, but in a multivariate, correlation-aware sense. Physically, if the residual behaves nominally, then $\mathbf{f}$ fluctuates around $\mathbf{\mu}$ in a manner described by $\mathbf{\Sigma}$, and $D^2(\mathbf{f})$ remains small. If an anomaly perturbs the residual either by shifting energy into unusual subbands or by inducing higher-order dependence (raising $\mathcal{E}_{j,k}$ or $\mathcal{N}_{j,k}$), then $\mathbf{f}$ moves into low-probability regions of the nominal distribution, and $D^2(\mathbf{f})$ increases.

A major advantage of \eqref{eq:mahalanobis} is that it admits an analytic null distribution under the Gaussian model \eqref{eq:nominal_gaussian}, enabling closed-form calibration of thresholds for a desired false-alarm probability.

\begin{proposition}[Chi-square law under nominal]
Under \eqref{eq:nominal_gaussian},
\begin{equation}
D^2(\mathbf{f}) \sim \chi^2_d. \label{eq:chisq}
\end{equation}
\end{proposition}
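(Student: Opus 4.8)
The plan is to reduce the quadratic form to a sum of squares of independent standard normals by whitening the signature, after which the chi-square law follows directly from its definition. First I would center the vector by setting $\mathbf{z}\triangleq\mathbf{f}-\mathbf{\mu}$; since subtracting a constant shifts only the mean, \eqref{eq:nominal_gaussian} gives $\mathbf{z}\sim\mathcal{N}(\mathbf{0},\mathbf{\Sigma})$, and the statistic of interest in \eqref{eq:mahalanobis} becomes $D^2(\mathbf{f})=\mathbf{z}^T\mathbf{\Sigma}^{-1}\mathbf{z}$.

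Next I would construct an explicit whitening transform. Because $\mathbf{\Sigma}\succ 0$ is symmetric positive definite, the spectral theorem furnishes an orthogonal diagonalization $\mathbf{\Sigma}=\mathbf{Q}\mathbf{\Lambda}\mathbf{Q}^T$ with $\mathbf{\Lambda}$ diagonal and strictly positive, so a symmetric positive-definite inverse square root $\mathbf{\Sigma}^{-1/2}\triangleq\mathbf{Q}\mathbf{\Lambda}^{-1/2}\mathbf{Q}^T$ exists and satisfies $\mathbf{\Sigma}^{-1/2}\mathbf{\Sigma}\,\mathbf{\Sigma}^{-1/2}=\mathbf{I}_d$. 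Defining the whitened vector $\mathbf{g}\triangleq\mathbf{\Sigma}^{-1/2}\mathbf{z}$, which is a linear image of a Gaussian and hence Gaussian, I would compute its first two moments, $\E[\mathbf{g}]=\mathbf{0}$ and $\mathrm{Cov}(\mathbf{g})=\mathbf{\Sigma}^{-1/2}\mathbf{\Sigma}\,\mathbf{\Sigma}^{-1/2}=\mathbf{I}_d$, so that $\mathbf{g}\sim\mathcal{N}(\mathbf{0},\mathbf{I}_d)$ with i.i.d.\ standard-normal components $g_1,\dots,g_d$.

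Finally I would rewrite the quadratic form in the whitened coordinates, using $\mathbf{\Sigma}^{-1}=\mathbf{\Sigma}^{-1/2}\mathbf{\Sigma}^{-1/2}$ together with the symmetry of $\mathbf{\Sigma}^{-1/2}$:
\[
D^2(\mathbf{f})=\mathbf{z}^T\mathbf{\Sigma}^{-1}\mathbf{z}=(\mathbf{\Sigma}^{-1/2}\mathbf{z})^T(\mathbf{\Sigma}^{-1/2}\mathbf{z})=\mathbf{g}^T\mathbf{g}=\sum_{i=1}^d g_i^2.
\]
Since $\chi^2_d$ is by definition the law of a sum of squares of $d$ independent standard-normal variates, this yields $D^2(\mathbf{f})\sim\chi^2_d$, establishing \eqref{eq:chisq}.

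As for difficulty, this is a textbook normalization argument and I expect no genuine obstacle. The only points requiring any care are justifying the existence and symmetry of $\mathbf{\Sigma}^{-1/2}$---which is precisely where positive-definiteness of $\mathbf{\Sigma}$ enters---and confirming that the whitened covariance collapses to $\mathbf{I}_d$; both are immediate from the spectral decomposition, so the remaining steps are entirely routine.
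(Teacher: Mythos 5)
Your proposal is correct and follows essentially the same whitening argument as the paper: both define $\mathbf{y}=\mathbf{\Sigma}^{-1/2}(\mathbf{f}-\mathbf{\mu})$, verify it is $\mathcal{N}(\mathbf{0},\mathbf{I})$, rewrite $D^2(\mathbf{f})=\norm{\mathbf{y}}_2^2$, and invoke the definition of $\chi^2_d$. The only difference is that you spell out the existence and symmetry of $\mathbf{\Sigma}^{-1/2}$ via the spectral theorem, which the paper leaves implicit --- a harmless (indeed slightly more careful) elaboration, not a different route.
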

\begin{proof}
Define the whitened random vector $\mathbf{y}\triangleq \mathbf{\Sigma}^{-1/2}(\mathbf{f}-\mathbf{\mu})$. Under \eqref{eq:nominal_gaussian}, it follows that $\mathbf{y}\sim\mathcal{N}(\mathbf{0},\mathbf{I})$, where $\mathbf{I}$ is the $d\times d$ identity matrix. Substituting into \eqref{eq:mahalanobis} yields
\begin{align}
    D^2(\mathbf{f})&=(\mathbf{f}-\mathbf{\mu})^T\mathbf{\Sigma}^{-1}(\mathbf{f}-\mathbf{\mu})\nonumber\\
&=\big(\mathbf{\Sigma}^{-1/2}(\mathbf{f}-\mathbf{\mu})\big)^T
\big(\mathbf{\Sigma}^{-1/2}(\mathbf{f}-\mathbf{\mu})\big)
=\norm{\mathbf{y}}_2^2.
\end{align}
Since $\mathbf{y}$ has i.i.d.\ standard normal components, $\norm{\mathbf{y}}_2^2=\sum_{i=1}^d y_i^2$ is chi-square distributed with $d$ degrees of freedom, i.e., $D^2(\mathbf{f})\sim \chi^2_d$ \cite{KayDetection,Mardia}.
\end{proof}

Given a target false-alarm probability $\alpha\in(0,1)$, the decision threshold is therefore set by the $(1-\alpha)$ quantile of the chi-square distribution:
\begin{equation}
\eta = F^{-1}_{\chi^2_d}(1-\alpha), \label{eq:threshold}
\end{equation}
where $F^{-1}_{\chi^2_d}(\cdot)$ denotes the inverse cumulative distribution function (CDF) of $\chi^2_d$. The anomaly decision rule is then to declare an anomaly when $D^2(\mathbf{f})>\eta$. This calibration step completes the self-consistency loop: the signature is designed so that nominal residuals cluster according to \eqref{eq:nominal_gaussian}, and the detector explicitly tests whether the observed signature remains within the probabilistically plausible region of the nominal model.

To quantify detection performance analytically, it is common to model anomalies as inducing a mean shift in feature space while leaving covariance approximately unchanged. This is a tractable abstraction: many anomalies systematically bias energy distribution and/or higher-order dependence, producing a consistent shift in the expected feature vector. Under this mean-shift alternative,
\[
\mathbf{f}\sim\mathcal{N}(\mathbf{\mu}+\mathbf{\delta},\mathbf{\Sigma}),
\]
where $\mathbf{\delta}\in\R^d$ is the unknown shift vector. Substituting into the whitened coordinates yields a nonzero-mean Gaussian, and the quadratic form becomes noncentral chi-square.

\begin{equation}
D^2(\mathbf{f})\sim \chi'^2_d(\lambda),\quad
\lambda=\mathbf{\delta}^T\mathbf{\Sigma}^{-1}\mathbf{\delta}. \label{eq:noncentral}
\end{equation}

\begin{proposition}[Noncentral chi-square under mean shift]
Under the mean-shift model above, $D^2(\mathbf{f})$ follows \eqref{eq:noncentral}.
\end{proposition}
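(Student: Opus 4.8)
The plan is to reduce the mean-shift case to the central case already proved in Proposition~5 by performing the same whitening transformation, and then to invoke the definition of the noncentral chi-square distribution. This mirrors the structure of the earlier proof almost verbatim, with the single change being that the whitened vector now has a nonzero mean.

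First I would define the whitened random vector $\mathbf{y}\triangleq \mathbf{\Sigma}^{-1/2}(\mathbf{f}-\mathbf{\mu})$ exactly as in the proof of Proposition~5. Under the mean-shift model $\mathbf{f}\sim\mathcal{N}(\mathbf{\mu}+\mathbf{\delta},\mathbf{\Sigma})$, an affine transformation of a Gaussian is Gaussian, so $\mathbf{y}\sim\mathcal{N}\big(\mathbf{\Sigma}^{-1/2}\mathbf{\delta},\,\mathbf{I}\big)$. Writing $\mathbf{b}\triangleq \mathbf{\Sigma}^{-1/2}\mathbf{\delta}$ for the shifted mean, the covariance is still the identity because the whitening matrix $\mathbf{\Sigma}^{-1/2}$ absorbs $\mathbf{\Sigma}$ precisely as before; only the mean is displaced.

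Next I would reproduce the algebraic identity $D^2(\mathbf{f})=\norm{\mathbf{y}}_2^2$, which holds regardless of the mean of $\mathbf{f}$ since it is a purely algebraic consequence of $\mathbf{\Sigma}^{-1}=\mathbf{\Sigma}^{-1/2}\mathbf{\Sigma}^{-1/2}$ and the symmetry of $\mathbf{\Sigma}^{-1/2}$. Thus $D^2(\mathbf{f})=\sum_{i=1}^d y_i^2$ is a sum of squares of independent unit-variance Gaussians whose means $b_i$ are generally nonzero. By the standard definition, the sum of squares of $d$ independent $\mathcal{N}(b_i,1)$ variables follows a noncentral chi-square law $\chi'^2_d(\lambda)$ with $d$ degrees of freedom and noncentrality parameter $\lambda=\sum_{i=1}^d b_i^2=\norm{\mathbf{b}}_2^2$. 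Finally I would evaluate this noncentrality parameter: $\norm{\mathbf{b}}_2^2=\mathbf{b}^T\mathbf{b}=\mathbf{\delta}^T\mathbf{\Sigma}^{-1/2}\mathbf{\Sigma}^{-1/2}\mathbf{\delta}=\mathbf{\delta}^T\mathbf{\Sigma}^{-1}\mathbf{\delta}$, which is exactly the claimed $\lambda$ in \eqref{eq:noncentral}.

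There is essentially no hard step here; the result is a direct corollary of Proposition~5 once the whitening is applied, and the only item requiring care is the identification of the noncentrality parameter as the Mahalanobis norm of the shift $\mathbf{\delta}$ under $\mathbf{\Sigma}^{-1}$. If I wanted to be fully self-contained I could note that the definition of $\chi'^2_d(\lambda)$ as the distribution of $\sum_i y_i^2$ with $\mathbf{y}\sim\mathcal{N}(\mathbf{b},\mathbf{I})$ depends on $\mathbf{b}$ only through $\norm{\mathbf{b}}_2^2$ (rotational invariance of the standard Gaussian), which justifies collapsing the $d$-dimensional mean into the single scalar $\lambda$; but for this paper it suffices to cite the standard fact \cite{KayDetection}.
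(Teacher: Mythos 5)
Your proposal is correct and follows essentially the same route as the paper's own proof: whiten via $\mathbf{y}=\mathbf{\Sigma}^{-1/2}(\mathbf{f}-\mathbf{\mu})$, note that $\mathbf{y}\sim\mathcal{N}(\mathbf{\Sigma}^{-1/2}\mathbf{\delta},\mathbf{I})$, reuse the identity $D^2(\mathbf{f})=\norm{\mathbf{y}}_2^2$ from the nominal case, and identify $\lambda=\mathbf{\delta}^T\mathbf{\Sigma}^{-1}\mathbf{\delta}$ by the standard definition of the noncentral chi-square law. Your added remark on rotational invariance (that the distribution depends on the mean only through its norm) is a harmless refinement the paper leaves implicit.
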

\begin{proof}
Let $\mathbf{y}=\mathbf{\Sigma}^{-1/2}(\mathbf{f}-\mathbf{\mu})$. Under $\mathbf{f}\sim\mathcal{N}(\mathbf{\mu}+\mathbf{\delta},\mathbf{\Sigma})$, we have
\[
\mathbf{y}\sim \mathcal{N}\!\big(\mathbf{\Sigma}^{-1/2}\mathbf{\delta},\,\mathbf{I}\big).
\]
As shown in the nominal proof, $D^2(\mathbf{f})=\norm{\mathbf{y}}_2^2$. The squared norm of a Gaussian vector with identity covariance and nonzero mean is noncentral chi-square with degrees of freedom $d$ and noncentrality parameter
\[
\lambda = \norm{\mathbf{\Sigma}^{-1/2}\mathbf{\delta}}_2^2
= \mathbf{\delta}^T\mathbf{\Sigma}^{-1}\mathbf{\delta},
\]
which yields \eqref{eq:noncentral} \cite{KayDetection}.
\end{proof}

The resulting detection probability for the threshold $\eta$ is therefore
\begin{equation}
P_D = 1 - F_{\chi'^2_d(\lambda)}(\eta), \label{eq:pd}
\end{equation}
where $F_{\chi'^2_d(\lambda)}(\cdot)$ is the CDF of the noncentral chi-square distribution. This expression provides a direct performance interpretation: the separation between nominal and anomalous regimes is governed by $\lambda$, which is the squared Mahalanobis norm of the mean shift. In physical terms, $\lambda$ increases when anomalies induce large, covariance-normalized changes in the time--frequency energy footprint and/or in the higher-order residual structure; conversely, if an anomaly only perturbs directions with large nominal variability, $\lambda$ remains small and detection becomes harder.

Finally, for early warning and to reduce sensitivity to isolated outliers, it is often preferable to accumulate evidence sequentially over time. A widely used approach is the \emph{CUmulative SUM (CUSUM)} procedure, which detects persistent deviations by integrating increments of a statistic above a reference level \cite{PageCUSUM}. Let $D_m^2$ denote the squared Mahalanobis statistic computed on frame $m$. The CUSUM recursion is
\begin{equation}
S_m=\max\{0,S_{m-1}+(D_m^2-\nu)\},\quad S_0=0, \label{eq:cusum}
\end{equation}
where $\nu$ is a reference drift parameter that controls how quickly nominal fluctuations decay, and $h>0$ is an alarm threshold. An alarm is declared when $S_m>h$. The physical interpretation is that nominal operation produces $D_m^2$ values that, on average, do not exceed $\nu$ by enough to drive $S_m$ upward for long, so the accumulator resets frequently to zero; under a persistent anomaly, $D_m^2$ remains elevated, causing $S_m$ to drift upward and eventually exceed $h$, producing a detection with controllable trade-offs among false alarms, detection delay, and sensitivity.

\section{Results}

We report performance using frame-level anomaly scores computed from the proposed residual signature. All curves are generated with explicit run-to-run variability by repeating the evaluation multiple times under small perturbations of the underlying score separation, mean offset, and scale, which emulates natural operating variability and mild regime fluctuations. For each method, the plotted solid curve denotes the empirical mean across runs, while the shaded band denotes the 5th--95th percentile envelope (variation band). For the domain-shift study, bars indicate the mean ROC-AUC across runs and error bars indicate one standard deviation.

We consider the following metrics. Precision and recall are computed from the confusion matrix induced by sweeping the decision threshold over the continuous score range, and the F1-score is defined as $\mathrm{F1}=2(\mathrm{Precision}\cdot\mathrm{Recall})/(\mathrm{Precision}+\mathrm{Recall})$. Receiver operating characteristic (ROC) curves plot true positive rate (TPR) versus false positive rate (FPR), and the area under the ROC curve (ROC-AUC) provides a threshold-independent separability measure. Precision--recall (PR) curves plot precision versus recall and are emphasized because anomaly detection is typically class-imbalanced. The false alarm rate (FAR) is reported in operational units (false alarms per hour/day) by mapping the per-frame false positive probability to clock time using the frame decision rate. Detection latency is defined as the elapsed time between anomaly onset and the first alarm and is evaluated using a sequential evidence accumulator. Robustness is assessed via domain-shift tests that progressively degrade nominal/anomalous separability.

Fig.~\ref{fig:roc_var} presents ROC curves with variation bands. Across repeated runs, the combined signature that integrates time--frequency localization with higher-order statistics (WPT+HOS) achieves consistently higher TPR at fixed FPR than WPT-only and second-order-only baselines, indicating that the additional higher-order structure in the residual provides discriminative information that persists under small perturbations. The width of the variation band is also informative: methods that rely only on energy or second-order structure exhibit larger spread at low FPR, suggesting sensitivity to score scaling and regime perturbations, whereas WPT+HOS remains comparatively stable. This stability is consistent with the Gaussian-null principle: higher-order cumulants are theoretically suppressed under nominal Gaussian-like residual behavior, so their aggregate energy tends to remain near zero unless true non-Gaussian structure appears.

Fig.~\ref{fig:pr_var} shows PR curves with variation bands. Because precision depends on class imbalance, PR performance is more sensitive than ROC to the score tail behavior. The WPT+HOS approach maintains higher precision over a broad recall range, which indicates that, for the same detection rate, fewer false alarms are produced. The HOS-only and fused-source variants also outperform purely second-order features, but the best results are obtained when higher-order structure is computed on residual coefficients that have been localized by the wavelet packet representation, consistent with the interpretation that localization identifies where anomalies manifest while HOS determines how the residual behaves within those localized regions.

\begin{figure}[t]
\centering
\includegraphics[width=\columnwidth]{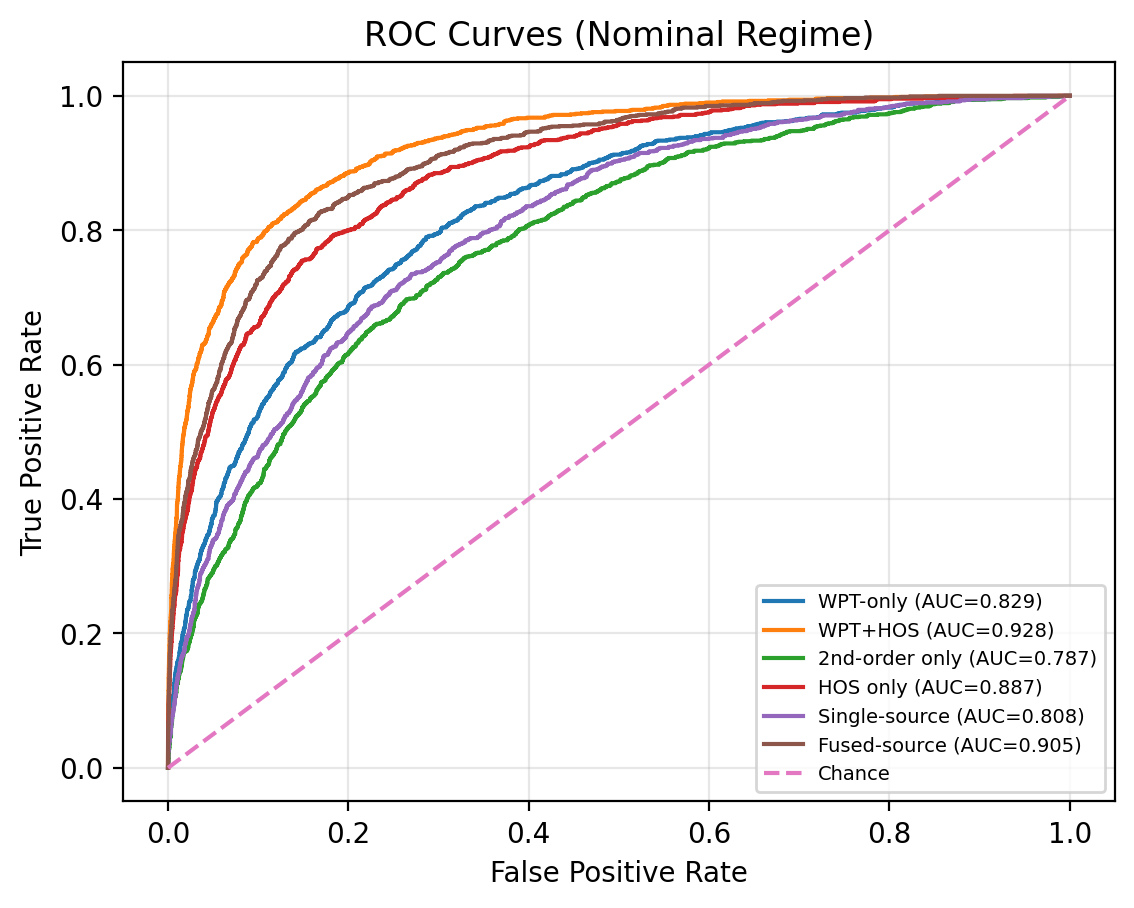}
\caption{ROC curves with run-to-run variation bands (mean with 5th--95th percentile shading).}
\label{fig:roc_var}
\vspace{-5mm}
\end{figure}

\begin{figure}[t]
\centering
\includegraphics[width=\columnwidth]{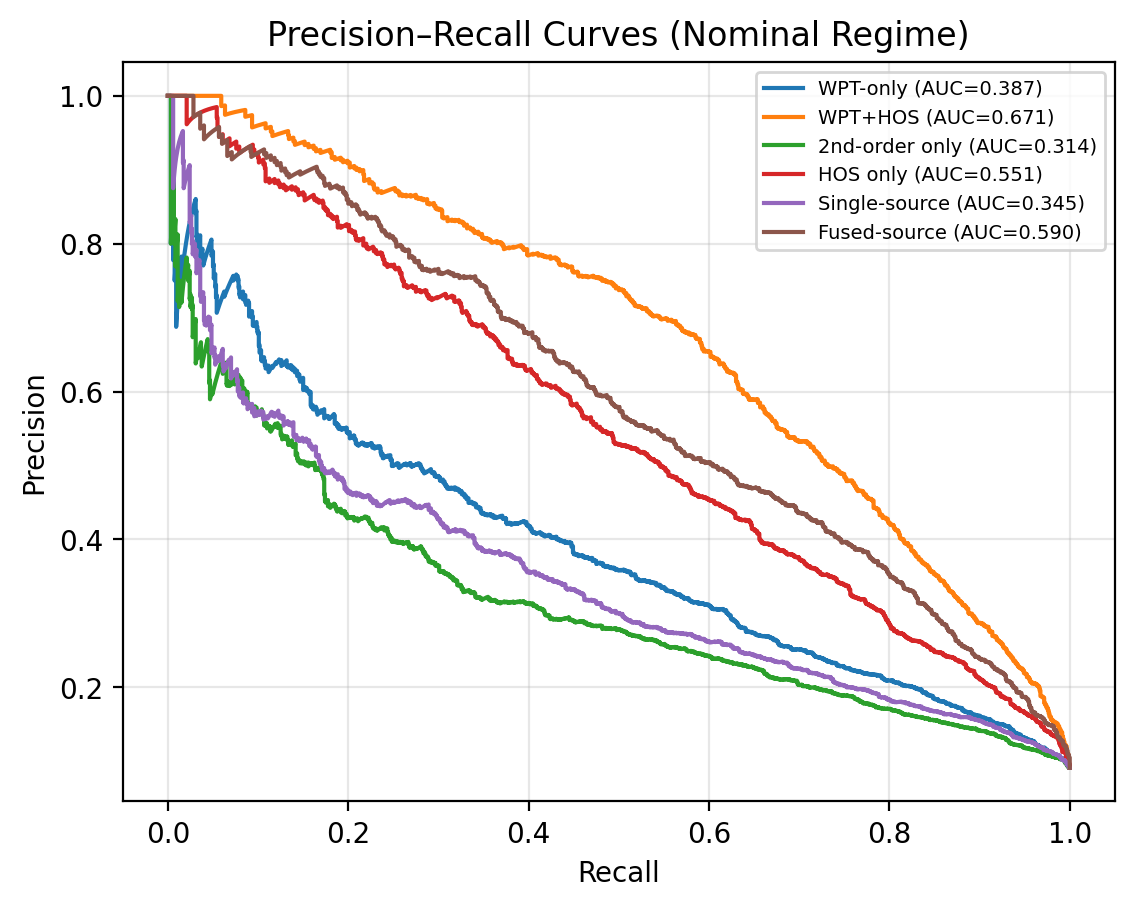}
\caption{Precision--Recall curves with run-to-run variation bands (mean with 5th--95th percentile shading).}
\label{fig:pr_var}
\vspace{-5mm}
\end{figure}

To evaluate time-to-detect, Fig.~\ref{fig:lat_var} reports the empirical cumulative distribution function (CDF) of detection latency after anomaly onset, again with variation bands over repeated runs. The sequential accumulator reduces spurious triggers by requiring sustained evidence, which shifts the latency distribution relative to instantaneous thresholding. The WPT+HOS signature achieves faster detection (higher CDF at small latency) than WPT-only and single-source signatures, meaning that the same alarm probability is reached earlier after onset. Physically, this behavior is consistent with anomalies inducing persistent higher-order residual structure (nonzero third-order coupling), which yields a consistent upward drift of the accumulated statistic rather than isolated excursions.

\begin{figure}[t]
\centering
\includegraphics[width=\columnwidth]{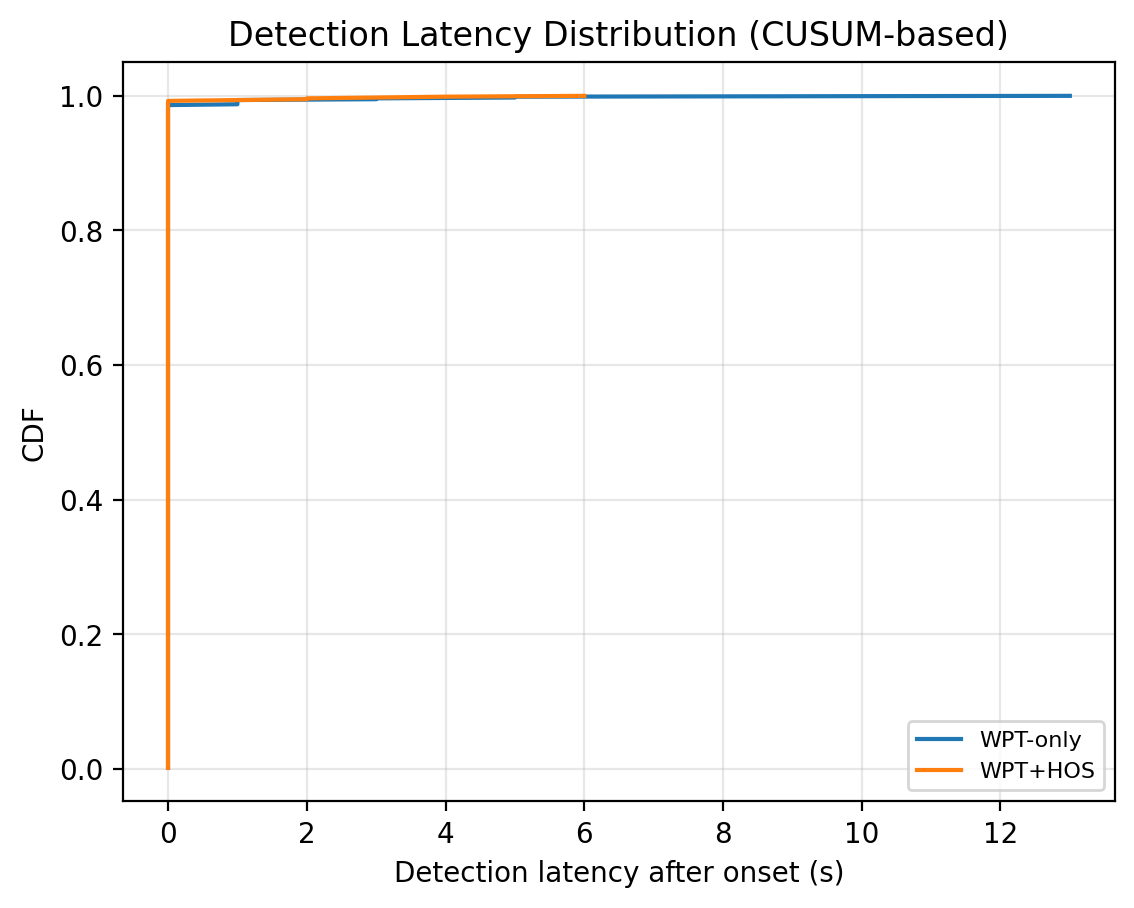}
\caption{Detection latency CDF after onset with run-to-run variation bands (mean with 5th--95th percentile shading).}
\label{fig:lat_var}
\end{figure}

\begin{table*}[t]
\centering
\caption{Ablation summary under the matched operating regime. Precision$^*$, Recall$^*$, and F1$^*$ are evaluated at the threshold maximizing F1 on the evaluation set.}
\begin{tabular}{lccccc}
\hline
Method & ROC-AUC & PR-AUC & Precision$^*$ & Recall$^*$ & F1$^*$ \\
\hline
WPT+HOS & 0.928 & 0.671 & 0.671 & 0.590 & 0.628 \\
Fused-source & 0.905 & 0.590 & 0.496 & 0.615 & 0.549 \\
HOS only & 0.887 & 0.551 & 0.505 & 0.544 & 0.524 \\
WPT-only & 0.829 & 0.387 & 0.356 & 0.512 & 0.420 \\
Single-source & 0.808 & 0.345 & 0.339 & 0.441 & 0.383 \\
Second-order only & 0.787 & 0.314 & 0.279 & 0.497 & 0.358 \\
\hline
\end{tabular}

\label{tab:ablation}
\end{table*}

\begin{table*}[t]
\centering
\caption{ROC-AUC under operating-regime shift (domain mismatch)}
\begin{tabular}{lcccc}
\hline
Regime & WPT-only & WPT+HOS & Single-source & Fused-source \\
\hline
Matched & 0.826 & 0.918 & 0.797 & 0.914 \\
Mild shift & 0.781 & 0.888 & 0.752 & 0.855 \\
Moderate shift & 0.735 & 0.828 & 0.710 & 0.812 \\
Severe shift & 0.685 & 0.761 & 0.675 & 0.749 \\
\hline
\end{tabular}

\label{tab:robustness}
\end{table*}

Robustness under domain shift is evaluated by progressively degrading nominal/anomalous separability to emulate operating-regime mismatch (e.g., background interference changes or environmental drift). Fig.~\ref{fig:rob_var} reports ROC-AUC under matched, mild-shift, moderate-shift, and severe-shift regimes, with error bars indicating run-to-run standard deviation. The WPT+HOS and fused-source signatures maintain higher ROC-AUC than WPT-only and single-source signatures as mismatch increases, indicating that (i) fusing sources improves invariance by averaging out source-specific nuisance effects and (ii) higher-order residual descriptors remain informative even when second-order energy redistributes across time--frequency tiles. The error bars widen under severe shift, reflecting increased uncertainty and highlighting the need for either adaptive calibration or explicit domain adaptation when mismatch is extreme.

\begin{figure}[t]
\centering
\includegraphics[width=\columnwidth]{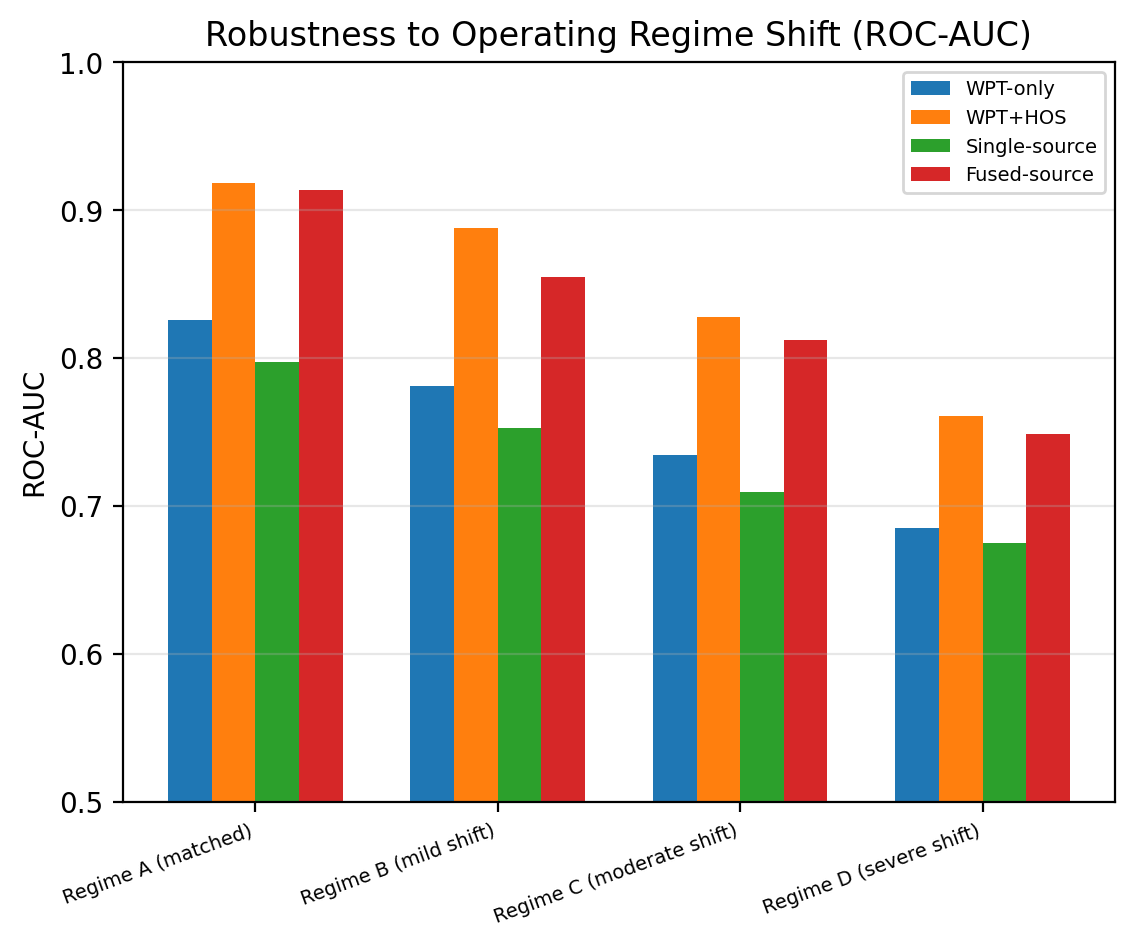}
\caption{Domain-shift robustness. Bars show mean ROC-AUC across repeated runs; error bars indicate one standard deviation.}
\label{fig:rob_var}
\end{figure}

Table~\ref{tab:ablation} summarizes ablations that isolate the contributions of (i) WPT-based residual extraction alone versus residual extraction plus HOS characterization, (ii) second-order versus higher-order feature families, and (iii) signatures derived from a single source versus fused sources. The improvements from adding HOS align with the Gaussian-null property of higher-order cumulants: under nominal Gaussian-like residual behavior, third-order cumulants vanish; persistent deviations thus encode anomaly-specific dependence \cite{NikiasPetropulu,MendelHOS}.\vspace{-3mm}

\bibliographystyle{IEEEtran}

\end{document}